\documentclass[10pt,conference]{IEEEtran}
\IEEEoverridecommandlockouts

\usepackage{cite}
\usepackage{amsmath,amssymb,amsfonts}
\usepackage{amsthm}
\usepackage{graphicx}
\usepackage{bm}
\usepackage{booktabs}
\usepackage{multirow}
\usepackage{xcolor}
\usepackage{pgfplots}
\usepackage{tikz}
\pgfplotsset{compat=1.17}
\usetikzlibrary{arrows.meta,decorations.pathreplacing,calc,positioning,shapes.geometric}

\newtheorem{theorem}{Theorem}

\newtheorem{remark}{Remark}
\newtheorem{assumption}{Assumption}

\begin{document}

\title{Input-to-State Stable Bundle Koopman Neural ODEs
for Learning Controlled Dynamics under Environmental Constraints}

\author{Lin Feng \\
\textit{Faculty of Engineering, King Saud University, Jeddah, Saudi Arabia.}
}

\maketitle

%==================================================================
\begin{abstract}
We propose ISS-BKNO, a unified framework that integrates
Koopman operator identification, Neural ordinary differential
equations (ODEs), fiber bundle geometry, and input-to-state
stability (ISS) certification. Unlike prior approaches that
address stability, extrinsic inputs, or environmental
constraints in isolation, the proposed framework simultaneously
learns controlled nonlinear dynamics while guaranteeing global
convergence and a computable ISS gain. The architecture
introduces a three-stage lifting pipeline: a bundle-aware
encoder that separates environment-specific fibers, an
environment-conditioned Koopman backbone whose matrix spectrum
is constrained to lie in the left half-plane, and a residual
neural ODE correction whose Jacobian satisfies a quadratic
sector bound. Lyapunov-based ISS regularization turns the
stability requirement into a differentiable penalty that is
jointly optimized with the prediction objective. Theoretical
results establish fiber invariance, ISS with an explicit gain
formula, and an approximation error bound that scales with the
EDMD residual. Experiments on a pendulum, cart-pole, a
unicycle-based navigation task, and a Franka Emika manipulator
demonstrate substantially improved prediction accuracy and
robustness under matched disturbances compared with existing
Neural ODE and Koopman baselines.
\end{abstract}

\begin{IEEEkeywords}
Koopman operator, Neural ODEs, input-to-state stability,
fiber bundle, environmental constraints, contraction theory,
data-driven control.
\end{IEEEkeywords}

%==================================================================
\section{Introduction}
%==================================================================

The ability to learn accurate, certifiably stable models of
controlled dynamical systems from data is a prerequisite for
deploying learning-based controllers in safety-critical
settings. Three broad research threads have made significant
progress toward this goal, but they have largely advanced
independently.

The first thread concerns \emph{Koopman operator methods}
\cite{mezic2005,brunton2021modern,korda2018linear}. The central
idea---that nonlinear dynamics can be represented as a linear
semigroup acting on a lifted observable space---has been
exploited for model predictive control \cite{korda2018linear}
and ISS verification of identified models \cite{koopmaniss}.
The practical algorithm for computing Koopman approximations
is Extended Dynamic Mode Decomposition (EDMD)
\cite{williams2015edmd}, introduced as an extension of
Dynamic Mode Decomposition \cite{schmid2010dmd}; convergence
of EDMD to the true Koopman operator as dictionary size
grows was established in \cite{korda2018convergence}. Deep
encoder--decoder architectures have further improved lifting
quality by jointly learning the observable map from data
\cite{lusch2018deep}, substantially reducing the approximation
residual at the cost of losing the explicit spectral structure
that makes EDMD certificates tractable.

The second thread concerns \emph{stable Neural ODEs}. Chen
et al.\ \cite{chen2018neural} proposed parameterizing the
derivative of a hidden state with a neural network and
training via the adjoint sensitivity method, enabling
continuous-depth models with constant memory cost. A persistent
limitation of this representation is the absence of any
stability guarantee: the learned vector field can be
contracting near training trajectories yet diverge under
distributional shift. ControlSynth Neural ODEs \cite{controlsynth}
address this by introducing an auxiliary sub-network that
enforces a Persidskii-type sector inequality on the Jacobian,
certifying global convergence through tractable linear matrix
inequalities (LMIs) even for complex dynamical regimes.
The theoretical certificate derived in \cite{controlsynth}
shows that systems in this class satisfy a Lyapunov decrease
condition with an explicit rate, a property we inherit and
extend in the present work.

The third thread concerns \emph{environment-aware and
geometry-constrained learning}. When operating conditions
vary continuously---road friction, payload, atmospheric
pressure---a single nominal model cannot generalize reliably.
ICODE \cite{icode} addresses this by treating the extrinsic
input $\xi$ as an explicit real-time argument of the Neural
ODE vector field, deriving sufficient conditions for contraction
to hold uniformly over the environment space. At a finer
geometric level, \cite{bundle} introduces a fiber bundle over
the state manifold whose structure encodes sensing-dependent
constraints and enables measurement-aware Control Barrier
Functions (CBFs) that adapt to local observation quality,
with provable convergence and safety preservation.

Despite these advances, no existing method simultaneously
provides: (i)~an explicit environment-conditioned latent
coordinate system that separates fiber geometry from dynamics;
(ii)~an ISS certificate for the composite Koopman--Neural ODE
model that is checkable via a semidefinite program; and
(iii)~a unified training objective that enforces all three
properties jointly. This paper provides exactly these missing
ingredients through the ISS-BKNO framework.

\textbf{Contributions.}
\begin{enumerate}
  \item We define a \emph{bundle-aware Koopman lifting} in
        which the encoder maps the state to a fiber
        $\mathcal{F}_\xi\subset\mathbb{R}^r$ determined by
        the current environment, and the Koopman matrix
        $K(\xi)$ acts on this fiber
        (Sections~\ref{sec:bundle}--\ref{sec:koopman}).
  \item We introduce a \emph{Lyapunov-based ISS
        regularization} loss that enforces a quadratic decrease
        condition on $V(\psi)=\psi^\top P\psi$ during training,
        yielding a closed-form ISS gain upon convergence
        (Section~\ref{sec:iss}).
  \item We prove three formal results: fiber invariance of the
        bundle coordinate map, ISS of the latent dynamics with
        an explicit gain, and an approximation error bound that
        relates the prediction RMSE to the EDMD residual
        (Section~\ref{sec:theory}).
  \item We validate ISS-BKNO on four benchmark tasks
        against five baselines, reporting up to a 58\%
        reduction in prediction RMSE and demonstrating
        bounded disturbance propagation consistent with
        the ISS certificate (Section~\ref{sec:exp}).
\end{enumerate}

%==================================================================
\section{Related Work}
%==================================================================

\subsection{Koopman Operator Identification}

Koopman's observation \cite{koopman1931} that any measure-
preserving dynamical system admits a unitary linear
representation on the space of $L^2$ observables motivated
decades of spectral analysis. The practical identification
algorithms are DMD \cite{schmid2010dmd}, EDMD
\cite{williams2015edmd}, and their kernel \cite{williams2015kernel}
and deep variants \cite{lusch2018deep}. Stability properties
of EDMD-identified models were first characterized in
\cite{korda2018convergence}, and an LMI-based ISS verification
pipeline specifically for Koopman-identified models was
developed in \cite{koopmaniss}---the latter forms the theoretical
starting point for the ISS regularizer in Section~\ref{sec:iss}.
Koopman-based MPC \cite{korda2018linear} demonstrated that
the linear structure of the lifted model enables efficient
receding-horizon optimization.

\subsection{Stable Neural ODEs and Contraction Theory}

Neural ODEs \cite{chen2018neural} extended residual networks
to continuous depth by replacing the discrete layer stack with
an ODE initial-value problem, enabling memory-efficient
training via the adjoint method. Convergence-certified variants
include stable RNNs via monotone operator theory
\cite{revay2023recurrent} and ControlSynth Neural ODEs
\cite{controlsynth}, which enforce global convergence through
Persidskii-type LMIs derived from contraction analysis
\cite{lohmiller1998}. The contraction framework of
\cite{lohmiller1998} is central to our stable latent dynamics
parameterization: by restricting the symmetric part of $K(\xi)$
to be negative definite, every trajectory in the fiber
contracts toward the nominal solution at an exponential rate.

\subsection{Environment-Aware Dynamics and Bundle Structures}

ICODE \cite{icode} demonstrated that treating environmental
variables as explicit inputs to the Neural ODE---rather than
absorbing them into the model weights---enables contraction
guarantees to hold uniformly over the environment space.
This insight directly informs the bilinear structure we adopt
in Section~\ref{sec:koopman}. The measurement-induced bundle
approach of \cite{bundle} provides a rigorous geometric
setting for environment-dependent dynamics: the fiber bundle
$\pi:\mathcal{M}\to\mathcal{E}$ separates the state manifold
into environment-specific leaves, and the horizontal connection
encodes safe transitions between leaves. We adopt this
vocabulary but replace the CBF focus of \cite{bundle} with an
ISS-Koopman objective.

\subsection{ISS and Safety for Learned Models}

Sontag's ISS framework \cite{sontag2008} provides a principled
measure of robustness: a system is ISS if there exist
$\mathcal{KL}$- and $\mathcal{K}$-functions bounding the
state norm in terms of the initial condition and the disturbance,
respectively. Control Barrier Functions \cite{ames2017cbf}
complement ISS by enforcing hard state constraints, and their
combination with ISS for learned models is an active area.
The ISS verification LMI of \cite{koopmaniss} certifies
that the $L_2$-gain from disturbance to lifted state is
bounded; we use the same structure as a regularization
objective rather than a post-hoc check.

%==================================================================
\section{Problem Formulation}
\label{sec:prob}
%==================================================================

\subsection{System Class}

Consider the controlled nonlinear system
\begin{equation}
  \dot{x}(t) = f(x(t),u(t),\xi(t)) + w(t),
  \quad x(t)\in\mathcal{M}\subseteq\mathbb{R}^n,
  \label{eq:system}
\end{equation}
where $u(t)\in\mathbb{R}^m$ is the control input,
$\xi(t)\in\mathcal{E}\subset\mathbb{R}^{n_\xi}$ is a
measurable extrinsic environmental signal (e.g., friction
coefficient, payload mass, or wind velocity),
$w\in L_2[0,\infty)$ is a bounded external disturbance, and
$\mathcal{M}$ is a smooth submanifold encoding physical
constraints. The function $f$ is locally Lipschitz and
unknown; it is to be approximated from a finite dataset of
trajectory segments $\mathcal{D}=
\{(x_k,u_k,\xi_k,x_{k+1})\}_{k=1}^N$.

\begin{assumption}\label{ass:manifold}
$\mathcal{M}$ is compact and there exists a smooth diffeomorphism
$\Phi:\mathcal{M}\to\mathcal{Z}\subset\mathbb{R}^r$ with $r\geq n$.
\end{assumption}

\subsection{Objectives}

We seek a parametric model $\hat{f}_\theta$ of \eqref{eq:system}
that:
\begin{enumerate}
  \item \emph{predicts dynamics accurately}:
        $\|\hat{f}_\theta(x,u,\xi)-f(x,u,\xi)\|$ is small on
        $\mathcal{D}$ and generalizes across $\xi\in\mathcal{E}$;
  \item \emph{preserves environmental constraints}:
        trajectories of $\hat{f}_\theta$ remain in
        $\mathcal{M}$ whenever $x(0)\in\mathcal{M}$;
  \item \emph{guarantees ISS} \cite{sontag2008}:
        the latent error dynamics under $w$ admit a computable
        $L_2$-to-$\ell_\infty$ bound.
\end{enumerate}

%==================================================================
\section{Bundle Koopman Neural ODE}
%==================================================================

\subsection{Bundle Coordinate Representation}
\label{sec:bundle}

Following the measurement-induced bundle construction of
\cite{bundle}, we define a fiber bundle
$\pi:\mathcal{Z}\to\mathcal{E}$ over the environment space.
The fiber $\mathcal{F}_\xi=\pi^{-1}(\xi)$ represents the
admissible latent region under environmental condition $\xi$.
The encoder
\begin{equation}
  z = \Phi(x,\xi;\theta_\Phi)\in\mathcal{F}_\xi
  \label{eq:encoder}
\end{equation}
maps the physical state to the environment-specific fiber.
The bundle topology ensures that latent representations from
different environments are geometrically separated, which is
the key structural property inherited from \cite{bundle}. A
decoder $\Phi^{-1}(\cdot;\theta_\Psi)$ reconstructs physical
states; both networks are trained jointly with the dynamics
using the reconstruction loss
$\mathcal{L}_{\mathrm{rec}}=\sum_k\|x_k-\Phi^{-1}(z_k)\|^2$.

\subsection{Koopman Lifting and Dynamics}
\label{sec:koopman}

Given the fiber coordinate $z$, a second lifting
\begin{equation}
  \psi = \Psi(z;\theta_\Psi)\in\mathbb{R}^p
  \label{eq:lifting}
\end{equation}
maps to a Koopman-amenable observable space. The latent
evolution is modeled as
\begin{equation}
  \dot{\psi} = K(\xi)\psi + B(\xi)u + r_\theta(\psi,u,\xi),
  \label{eq:latent}
\end{equation}
where $K(\xi)\in\mathbb{R}^{p\times p}$ and
$B(\xi)\in\mathbb{R}^{p\times m}$ are environment-conditioned
matrices produced by small MLP heads, and $r_\theta$ is a
residual Neural ODE correction. The bilinear dependence of
$K$ and $B$ on $\xi$ follows the design principle of ICODE
\cite{icode}: the environment modulates the effective linear
system matrix in a physically interpretable way---for a
ground vehicle, a single friction parameter shifts the damping
matrix continuously without requiring separate models per
terrain type.

\subsection{Stable Latent Dynamics Parameterization}
\label{sec:stable}

To certify stability without post-hoc verification, we
parameterize $K(\xi)$ to be Hurwitz by construction. Inspired
by the Persidskii-type decomposition in \cite{controlsynth},
we write
\begin{equation}
  K(\xi) = -\tfrac{1}{2}Q(\xi)^\top Q(\xi) + S(\xi),
  \qquad S(\xi) = -S(\xi)^\top,
  \label{eq:K_param}
\end{equation}
where $Q(\xi)\in\mathbb{R}^{p\times p}$ and the skew-symmetric
$S(\xi)$ capture dissipative and conservative modes,
respectively. This parameterization guarantees
\begin{equation}
  \lambda_{\max}(K(\xi)+K(\xi)^\top)
  = \lambda_{\max}(-Q(\xi)^\top Q(\xi)) \leq 0,
  \label{eq:hurwitz}
\end{equation}
so the symmetric part of $K(\xi)$ is negative semidefinite
for any $Q(\xi)$; strict negativity is enforced during training
by requiring $Q(\xi)$ to have full column rank. The residual
$r_\theta$ is regularized to satisfy a quadratic sector
condition (Assumption~\ref{ass:sector} below) following the
framework of \cite{controlsynth}.

%==================================================================
\section{ISS Regularization}
\label{sec:iss}
%==================================================================

\subsection{Lyapunov Candidate and ISS Condition}

Define the quadratic Lyapunov candidate
$V(\psi)=\psi^\top P\psi$
with $P=P^\top\succ 0$ to be learned jointly with the model.
Along trajectories of \eqref{eq:latent},
\begin{align}
  \dot{V} &= \psi^\top(PK(\xi)+K(\xi)^\top P)\psi
    + 2\psi^\top Pr_\theta
    + 2\psi^\top PB(\xi)u + 2\psi^\top Pw. \notag
\end{align}
ISS requires $\dot{V}\leq -\alpha V + \gamma|w|^2$ for
constants $\alpha,\gamma>0$. Using the sector condition
on $r_\theta$ and Young's inequality on the disturbance term,
a sufficient condition takes the block LMI form
\begin{equation}
  \Xi(\xi) :=
  \begin{bmatrix}
    \mathrm{He}(PK(\xi)) + \alpha P + \lambda\kappa I & Pr_\theta' \\
    r_\theta'{}^\top P & -2\lambda I
  \end{bmatrix} + \frac{P^2}{\gamma} \preceq 0,
  \label{eq:iss_lmi}
\end{equation}
where $\lambda>0$ is an S-procedure multiplier and
$r_\theta'=\partial r_\theta/\partial\psi$.
The scalar $\gamma$ in \eqref{eq:iss_lmi} is the $L_2$-gain
from $w$ to $\psi$, precisely the quantity certified in
\cite{koopmaniss} for identified Koopman models; here it
appears as a learnable parameter that the optimizer drives
to its minimum subject to \eqref{eq:iss_lmi}.

\subsection{Training Objective}

The full training loss is
\begin{align}
  \mathcal{L}
  &= \mathcal{L}_{\mathrm{pred}}
  + \lambda_c\,\mathcal{L}_{\mathrm{ISS}}
  + \lambda_r\,\mathcal{L}_{\mathrm{rec}}
  + \lambda_b\,\mathcal{L}_{\mathrm{CBF}},
  \label{eq:loss}
\end{align}
where
$\mathcal{L}_{\mathrm{pred}}=
\sum_k\|\psi_{k+1}-\hat\psi_{k+1}\|^2$
is the one-step Koopman prediction error,
$\mathcal{L}_{\mathrm{ISS}}=
\max(0,\;\dot V+\alpha V-\gamma|u|^2)$
penalizes violations of the Lyapunov decrease condition,
and
$\mathcal{L}_{\mathrm{CBF}}=
\max(0,\;-\dot h(x)-\kappa h(x))$
penalizes violations of the CBF condition
$h(x)\geq 0$ \cite{ames2017cbf},
with $\kappa>0$ a class-$\mathcal{K}$ coefficient.
The weights $\lambda_c,\lambda_r,\lambda_b>0$ are
hyperparameters selected by validation.

%==================================================================
\section{Theoretical Analysis}
\label{sec:theory}
%==================================================================

\begin{assumption}\label{ass:sector}
The residual network $r_\theta$ satisfies the quadratic
sector condition $r_\theta(\psi)^\top[\psi-\kappa^{-1}
r_\theta(\psi)]\geq 0$ for all $\psi$ and some $\kappa>0$.
\end{assumption}

This condition is enforced during training via the Persidskii-
type LMI penalty in \eqref{eq:loss}, following the approach
of \cite{controlsynth}.

\begin{theorem}[Fiber Invariance]\label{thm:fiber}
Under Assumption~\ref{ass:manifold}, the encoder
\eqref{eq:encoder} satisfies
$\Phi(x,\xi;\theta_\Phi)\in\mathcal{F}_\xi$
for all $(x,\xi)\in\mathcal{M}\times\mathcal{E}$ at
convergence of $\mathcal{L}_{\mathrm{rec}}$,
provided the decoder $\Phi^{-1}$ is a left inverse of $\Phi$
on each fiber.
\end{theorem}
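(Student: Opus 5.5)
The natural route is to reduce fiber membership to a statement about the bundle projection $\pi$, and then use the vanishing reconstruction loss together with the left-inverse hypothesis to pin down $\pi(\Phi(x,\xi))=\xi$. Concretely, I would read the phrase ``at convergence of $\mathcal{L}_{\mathrm{rec}}$'' as meaning the decoder reconstructs exactly on the data-generating set:
\begin{equation*}
\Phi^{-1}(\Phi(x,\xi;\theta_\Phi);\theta_\Psi)=x \quad \text{for all } (x,\xi)\in\mathcal{M}\times\mathcal{E}.
\end{equation*}
Passing from the finite dataset $\mathcal{D}$ to all of $\mathcal{M}\times\mathcal{E}$ needs two ingredients: density of the samples, and continuity of the networks on the compact set $\mathcal{M}$ (Assumption~\ref{ass:manifold}).

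\textbf{Step 1.} I would first make precise how the environment enters the latent coordinate. Since $\pi:\mathcal{Z}\to\mathcal{E}$ is a bundle, I would choose a local trivialization $\mathcal{Z}\supseteq\pi^{-1}(U)\cong U\times F$. In this chart the encoder writes as $\Phi(x,\xi)=(\phi_1(x,\xi),\phi_2(x,\xi))$, and the claim $\Phi(x,\xi)\in\mathcal{F}_\xi$ is exactly $\phi_1(x,\xi)=\xi$.

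\textbf{Step 2.} I would exploit the left-inverse hypothesis fiberwise. Since $\Phi^{-1}$ is a left inverse of $\Phi$ on each fiber, the restriction $\Phi^{-1}|_{\mathcal{F}_{\xi}}$ is injective on $\Phi(\mathcal{M},\xi)$. Suppose, for contradiction, that $\Phi(x,\xi)\in\mathcal{F}_{\xi'}$ with $\xi'\neq\xi$. Then the decoder evaluated there coincides with the fiber-$\xi'$ inverse, $\Phi^{-1}(\Phi(x,\xi))=\Phi^{-1}_{\xi'}(\Phi(x,\xi))$. Meanwhile zero reconstruction forces this value to equal $x$. The diffeomorphism $\Phi$ of Assumption~\ref{ass:manifold}, viewed fiber-by-fiber, then yields a point $\Phi(x,\xi')\in\mathcal{F}_{\xi'}$ with the same decoded value $x$. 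Injectivity of the decoder on $\mathcal{F}_{\xi'}$ gives $\Phi(x,\xi)=\Phi(x,\xi')$, which would force the encoder to ignore the environment label. I would rule this out using the separation property of the bundle topology, namely that the fibers are disjoint. I would then conclude $\phi_1(x,\xi)=\xi$ by continuity in $\xi$ together with connectedness of $\mathcal{E}$: the set of $\xi$ where membership holds is open and closed.

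\textbf{Main obstacle.} The hardest step is Step~2. As stated, the hypotheses do not obviously tie the base coordinate $\phi_1$ to $\xi$, since a decoder can be a left inverse on fibers while the encoder lands in the wrong fiber. My first attempt would be to make explicit a structural convention that is arguably implicit in \eqref{eq:encoder}: the encoder outputs $(\xi,\tilde\phi(x,\xi))$ in the trivialization, with the fiber component learned. Under that convention fiber membership holds by construction. The role of $\mathcal{L}_{\mathrm{rec}}\to0$ and the left inverse is then only to guarantee that $\tilde\phi(\cdot,\xi)$ is injective, so that $\Phi(\cdot,\xi)$ is a genuine embedding of $\mathcal{M}$ into $\mathcal{F}_\xi$. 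If that convention is unavailable, I would fall back on a continuity/degree argument on the compact manifold $\mathcal{M}$ to force $\phi_1\equiv\xi$.
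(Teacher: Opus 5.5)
Your Step~2 does not go through, and the continuity and degree fallbacks cannot rescue it.

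\textbf{Why Step~2 fails.} The contradiction argument is circular. To produce a second point of $\mathcal{F}_{\xi'}$ with decoded value $x$, and then invoke injectivity of the decoder there, you use $\Phi(x,\xi')\in\mathcal{F}_{\xi'}$. That is the statement being proved. Even granting it, the conclusion $\Phi(x,\xi)=\Phi(x,\xi')$ is not a contradiction. This point lies only in $\mathcal{F}_{\xi'}$, so disjointness of the fibers is not violated, and no hypothesis forbids the encoder from ignoring $\xi$.

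\textbf{Why the fallbacks fail.} The set $\{\xi:\phi_1(x,\xi)=\xi\}$ is at best closed. Nothing makes it open or nonempty, and $\mathcal{E}$ is not assumed connected. No degree argument constrains the base coordinate $\phi_1(\cdot,\xi):\mathcal{M}\to\mathcal{E}$ either.

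\textbf{Counterexample.} The stated hypotheses alone cannot imply the conclusion. In a trivialization $U\times F$, take $\Phi(x,\xi)=(\xi_0,\phi_0(x))$ for a fixed $\xi_0\in U$ and an injective $\phi_0:\mathcal{M}\to F$ with left inverse $\phi_0^{-1}$. Take the decoder $(\eta,z)\mapsto\phi_0^{-1}(z)$. Then $\mathcal{L}_{\mathrm{rec}}\equiv 0$ and the decoder is a left inverse of $\Phi(\cdot,\xi)$ for every $\xi$, yet $\pi(\Phi(x,\xi))=\xi_0\neq\xi$. The only other reading of ``left inverse on each fiber'' already presupposes $\Phi(\mathcal{M},\xi)\subseteq\mathcal{F}_\xi$, which is the conclusion.

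Separately, a finite dataset is never dense in $\mathcal{M}$. Your passage from $\mathcal{D}$ to all of $\mathcal{M}\times\mathcal{E}$ would therefore need a limiting argument. The paper only claims reconstruction almost surely over the empirical measure.

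\textbf{What works.} The ``structural convention'' you relegate to the obstacle paragraph is the viable route, and it is essentially the paper's proof. The paper never derives fiber membership from $\mathcal{L}_{\mathrm{rec}}\to 0$. It uses the reconstruction loss and the fiberwise diffeomorphism $\Phi(\cdot,\xi):\mathcal{M}\to\mathcal{F}_\xi$ only to obtain $\Phi^{-1}(\Phi(x,\xi),\xi)=x$ on the data. It then asserts that the fiber assignment is preserved by construction of $\pi$, ``which assigns fiber by environment label'', i.e., the base coordinate is hard-wired to $\xi$. Note that the paper already treats $\Phi(\cdot,\xi)$ as a map into $\mathcal{F}_\xi$.

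So promote that convention to an explicit hypothesis. Membership $\Phi(x,\xi)\in\mathcal{F}_\xi$ is then immediate for every $(x,\xi)$. Use the left-inverse/reconstruction condition only to get injectivity of the fiber component $\tilde\phi(\cdot,\xi)$, and drop Step~2 and the topological fallbacks.
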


\begin{proof}
By the inverse function theorem applied to the diffeomorphism
$\Phi(\cdot,\xi):\mathcal{M}\to\mathcal{F}_\xi$ for each
fixed $\xi$, convergence of the reconstruction loss implies
$\Phi^{-1}(\Phi(x,\xi),\xi)=x$ almost surely over the
empirical measure on $\mathcal{M}$. The fiber assignment
$x\mapsto\mathcal{F}_\xi$ is preserved by construction of
the bundle projection $\pi$, which assigns fiber by
environment label. \hfill$\blacksquare$
\end{proof}

\begin{theorem}[ISS of Latent Dynamics]\label{thm:ISS}
Suppose Assumption~\ref{ass:sector} holds and the LMI
\eqref{eq:iss_lmi} is feasible with parameters
$(P^\star,\gamma^\star,\lambda^\star)$ at the end of training.
Then the latent error dynamics under bounded disturbance
$w\in L_2[0,\infty)$ satisfy
\begin{equation}
  |\psi(t)| \leq c_1\,e^{-\alpha t/2}|\psi(0)|
  + c_2\,\gamma^\star\|w\|_{L_2},
  \label{eq:iss_bound}
\end{equation}
with $c_1 = \sqrt{\lambda_{\max}(P^\star)/\lambda_{\min}(P^\star)}$
and $c_2=1/\sqrt{\lambda_{\min}(P^\star)}$.
\end{theorem}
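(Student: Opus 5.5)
The plan is a standard Lyapunov/comparison argument with $V(\psi)=\psi^\top P^\star\psi$. We treat the input term $B(\xi)u$ as part of the nominal closed loop, so that only $w$ plays the role of the disturbance; equivalently, we work with the latent error dynamics as in the statement. The goal is the dissipation inequality
\[
\dot V \le -\alpha V + \gamma^\star |w|^2 .
\]

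\textbf{Step 1 (dissipation inequality).} Start from the expression for $\dot V$ in Section~\ref{sec:iss}. Stack $\eta=(\psi,r_\theta)$ and add the S-procedure term
\[
2\lambda^\star r_\theta^\top(\psi-\kappa^{-1}r_\theta)\ge 0,
\]
which is nonnegative by Assumption~\ref{ass:sector}. This bounds the cross term $2\psi^\top P^\star r_\theta$ through the block quadratic form in \eqref{eq:iss_lmi}. Apply Young's inequality to the disturbance term:
\[
2\psi^\top P^\star w \le \tfrac{1}{\gamma^\star}\psi^\top (P^\star)^2\psi+\gamma^\star|w|^2 .
\]
The $(P^\star)^2/\gamma^\star$ term is exactly the one absorbed in $\Xi$. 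Feasibility $\Xi\preceq 0$ then gives $\dot V+\alpha V-\gamma^\star|w|^2\le \eta^\top\Xi\,\eta\le 0$.

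This matching is the main obstacle. The LMI is written with the Jacobian $r_\theta'$ in the off-diagonal block rather than with $r_\theta$ itself, and with $\lambda\kappa I$ and $-2\lambda I$ as its diagonal weights. I would first try to reconcile it via the mean-value/incremental form $r_\theta(\psi)=\int_0^1 r_\theta'(s\psi)\,ds\,\psi$, assuming $r_\theta(0)=0$, so that the sector bound becomes a bound on $\psi^\top P^\star r_\theta' \psi$. If that fails, I would reinterpret \eqref{eq:iss_lmi} as the standard sector LMI in the variables $(\psi,r_\theta)$, with the off-diagonal block $P^\star+\lambda^\star I$. Either way the multipliers must be chosen so that the S-procedure term exactly cancels the indefinite cross terms. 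I would also check that the diagonal constants are consistent with the sector condition as stated; if they are not, I would adjust the constants by a harmless rescaling of $\lambda^\star$.

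\textbf{Step 2 (integrate).} Apply the comparison lemma to $\dot V\le-\alpha V+\gamma^\star|w|^2$, i.e.\ multiply by $e^{\alpha t}$ and integrate:
\[
V(t)\le e^{-\alpha t}V(0)+\gamma^\star\int_0^t e^{-\alpha(t-s)}|w(s)|^2ds \le e^{-\alpha t}V(0)+\gamma^\star\|w\|_{L_2}^2 .
\]

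\textbf{Step 3 (convert to norms).} Use $\lambda_{\min}(P^\star)|\psi|^2\le V\le\lambda_{\max}(P^\star)|\psi|^2$ and $\sqrt{a+b}\le\sqrt a+\sqrt b$. This yields
\[
|\psi(t)|\le c_1 e^{-\alpha t/2}|\psi(0)|+c_2\sqrt{\gamma^\star}\,\|w\|_{L_2}.
\]
This has the claimed $c_1$ and $c_2$, but the gain appears as $\sqrt{\gamma^\star}$ rather than $\gamma^\star$.

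To reach the stated form \eqref{eq:iss_bound}, I would either note that $\sqrt{\gamma^\star}\le\gamma^\star$ when $\gamma^\star\ge1$, or redefine the Young weighting in Step 1 (splitting with parameter $\gamma^{\star}$ squared). I flag this constant bookkeeping as a secondary point to settle.
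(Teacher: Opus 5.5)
Your skeleton (dissipation inequality from \eqref{eq:iss_lmi}, comparison lemma, norm equivalence) is exactly the paper's proof, and your Steps~2 and~3 are carried out correctly. The two points you flag, however, are genuine failures rather than bookkeeping. Neither of your patches closes them under the stated hypotheses, and nothing can, because the statement as printed is false. The paper's proof passes over both points with ``feasibility implies'' and ``taking square roots''.

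\textbf{Step 1:} the printed LMI does not imply $\dot V\le-\alpha V+\gamma^\star|w|^2$; it does not even imply stability. Take $p=1$, $u\equiv0$, $\kappa=1$, and $r_\theta(\psi)=\psi$, so Assumption~\ref{ass:sector} holds with equality and $r_\theta'=1$. Take $K=-0.9$, which is of the form \eqref{eq:K_param}, together with $P^\star=1$, $\alpha=0.1$, $\gamma^\star=10$, $\lambda^\star=1$. Then
\[
\Xi=\begin{bmatrix}-1.8+0.1+1+0.1 & 1\\ 1 & -2\end{bmatrix}=\begin{bmatrix}-0.6 & 1\\ 1 & -2\end{bmatrix}\prec 0,
\]
yet the latent dynamics are $\dot\psi=0.1\psi+w$, which is unstable for $w\equiv0$. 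So neither a mean-value rewriting nor a rescaling of $\lambda^\star$ can reconcile \eqref{eq:iss_lmi} with the sector condition.

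Your second option does work: the genuine S-procedure LMI with blocks $\mathrm{He}(P^\star K)+\alpha P^\star+(P^\star)^2/\gamma^\star$, $P^\star+\lambda^\star I$ and $-2\lambda^\star\kappa^{-1}I$. It is infeasible in this example, as it must be. But adopting it is a change of hypothesis, not a reconciliation. Similarly, nothing in the LMI controls the term $2\psi^\top P^\star B(\xi)u$. Absorbing $Bu$ into a ``nominal closed loop'' is legitimate only for $u\equiv0$, since a feedback $u=k(\psi)$ changes the closed-loop dynamics in a way the LMI never sees.

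\textbf{Step 3:} your $\sqrt{\gamma^\star}$ is the correct outcome, and the printed $\gamma^\star$ is false whenever $\gamma^\star<1$. Take $p=1$, $u\equiv0$, $r_\theta\equiv0$, $K=-10$, $P^\star=1$, $\gamma^\star=0.1$, $\alpha=9$, $\lambda^\star=5$, $\kappa=0.2$. The $(1,1)$ entry of $\Xi$ is $-20+9+1+10=0$ and the off-diagonal vanishes, so \eqref{eq:iss_lmi} holds with $c_1=c_2=1$. With $\psi(0)=0$ and $w=\mathbf{1}_{[0,0.04]}$ we have $\|w\|_{L_2}=0.2$, so \eqref{eq:iss_bound} would force $|\psi(0.04)|\le 0.02$. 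In fact $\psi(0.04)=(1-e^{-0.4})/10\approx0.033$.

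The structural reason is scaling. The map $(P,\lambda,\gamma)\mapsto s(P,\lambda,\gamma)$ sends $\Xi\mapsto s\Xi$, so feasibility, $c_1$ and $c_2\sqrt{\gamma^\star}$ are all scale invariant. By contrast, $c_2\gamma^\star$ scales like $\sqrt{s}$ and can be made arbitrarily small.

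Your patch (a) adds the hypothesis $\gamma^\star\ge1$. Patch (b) needs $(P^\star)^2/(\gamma^\star)^2$ in the LMI, which is strictly stronger than \eqref{eq:iss_lmi} when $\gamma^\star<1$. What your argument (and the paper's) actually establishes is the following: under the corrected sector LMI and $u\equiv0$,
\[
|\psi(t)|\le c_1e^{-\alpha t/2}|\psi(0)|+c_2\sqrt{\gamma^\star}\,\|w\|_{L_2}.
\]
State that, rather than forcing the printed constant.
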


\begin{proof}
Feasibility of \eqref{eq:iss_lmi} implies
$\dot V \leq -\alpha V + \gamma^\star|w|^2$.
Applying the comparison lemma \cite{khalil2002} to the scalar
inequality $\dot V\leq -\alpha V+\gamma^\star|w|^2$ yields
$V(t)\leq V(0)e^{-\alpha t}+\gamma^\star\int_0^te^{-\alpha(t-s)}
|w(s)|^2\,ds$. The bound \eqref{eq:iss_bound} follows by
taking square roots and applying the norm equivalence
$\lambda_{\min}(P^\star)|\psi|^2\leq V(\psi)\leq
\lambda_{\max}(P^\star)|\psi|^2$. \hfill$\blacksquare$
\end{proof}

\begin{theorem}[Approximation Error Bound]\label{thm:approx}
Let $\varepsilon_N$ denote the EDMD residual with a dictionary
of size $N$ \cite{korda2018convergence}. The prediction
error of ISS-BKNO satisfies
\begin{equation}
  \sup_{t\in[0,T]}|\hat x(t)-x(t)|
  \leq
  L_{\Phi^{-1}}\!\left(
  c_1 e^{-\alpha t/2}|\psi(0)|
  + c_2\,\gamma^\star\!\left(\varepsilon_N T\right)^{1/2}
  \right) \\ + O(\varepsilon_\Phi),
  \label{eq:approx}
\end{equation}
where $L_{\Phi^{-1}}$ is the Lipschitz constant of the
decoder and $\varepsilon_\Phi$ is the reconstruction error.
\end{theorem}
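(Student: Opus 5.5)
The plan is to treat the mismatch between the learned latent model \eqref{eq:latent} and the lifted true dynamics as a disturbance, and then feed it into Theorem~\ref{thm:ISS}. Write $\psi(t)=\Psi(\Phi(x(t),\xi(t)))$ for the lift of the true trajectory and $\hat\psi(t)$ for the model trajectory driven by the same $u$ and $\xi$. The first step is to show that $\psi$ obeys \eqref{eq:latent} with an additive term $w_N(t)$, where $w_N$ is the EDMD closure residual, i.e.\ the part of the Koopman generator acting on the lifted observables that the dictionary does not capture. By the definition of $\varepsilon_N$ in \cite{korda2018convergence}, I interpret this as $|w_N(t)|^2\le\varepsilon_N$ uniformly on the compact set $\mathcal{M}$ (Assumption~\ref{ass:manifold}) for $t\in[0,T]$. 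This interpretation should be stated explicitly as the meaning of ``EDMD residual''.

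The second step is to form the error $e=\hat\psi-\psi$. Its dynamics are
\[
\dot e=K(\xi)e+\bigl(r_\theta(\hat\psi)-r_\theta(\psi)\bigr)-w_N .
\]
To apply Theorem~\ref{thm:ISS} to $e$, the sector condition of Assumption~\ref{ass:sector} must hold for the increment $r_\theta(\hat\psi)-r_\theta(\psi)$ relative to $e$. The LMI \eqref{eq:iss_lmi} is stated with the Jacobian $r_\theta'$, so I would use the mean value theorem and take the LMI along the segment between $\psi$ and $\hat\psi$ to obtain the incremental version. The input term $B(\xi)u$ cancels because both trajectories receive the same input. Theorem~\ref{thm:ISS} then gives
\[
|e(t)|\le c_1e^{-\alpha t/2}|e(0)|+c_2\gamma^\star\|w_N\|_{L_2[0,t]},
\qquad
\|w_N\|_{L_2[0,t]}\le(\varepsilon_N T)^{1/2}.
\]
The statement writes $|\psi(0)|$ for the initial term. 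I would read it as the initial latent error, or bound $|e(0)|$ crudely by it, since the initial latent mismatch is $\hat\psi(0)-\psi(0)$.

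The third step maps back to physical space. The prediction is $\hat x=\Phi^{-1}(\hat\psi)$, with $\Psi$ folded into the decoder. Then
\[
|\hat x-x|\le|\Phi^{-1}(\hat\psi)-\Phi^{-1}(\psi)|+|\Phi^{-1}(\psi)-x|\le L_{\Phi^{-1}}|e|+\varepsilon_\Phi .
\]
Here the last term is the reconstruction error, and Theorem~\ref{thm:fiber} guarantees that the decoding happens on the correct fiber $\mathcal{F}_\xi$. Taking the supremum over $t\in[0,T]$ yields \eqref{eq:approx}, with the $t$ on the right read as the worst case $t=0$ (or left pointwise).

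I expect the main obstacle to be the second step. Theorem~\ref{thm:ISS} certifies the non-incremental system, so passing to the incremental error needs the sector and LMI condition to hold for differences, which is a contraction-type requirement. My first attempt would be to argue that \eqref{eq:iss_lmi} with $r_\theta'$ is exactly a uniform Jacobian condition, and hence implies contraction in the $P^\star$-metric \cite{lohmiller1998}, from which the incremental ISS estimate follows. A secondary difficulty is justifying a uniform pointwise bound on the residual from $\varepsilon_N$. I would make this an explicit hypothesis rather than derive it.
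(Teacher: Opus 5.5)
Your route is essentially the paper's: treat the lifting residual as the disturbance in Theorem~\ref{thm:ISS}, then return to physical coordinates with the decoder's Lipschitz constant and a triangle inequality for $\varepsilon_\Phi$. Your explicit hypothesis $|w_N(t)|^2\le\varepsilon_N$ suffices for the bound $\|\Delta\|_{L_2[0,T]}\le(\varepsilon_N T)^{1/2}$, which the paper simply attributes to the cited EDMD convergence result. The real difference is your step 2. The paper feeds $\Delta$ straight into Theorem~\ref{thm:ISS}. That theorem speaks of ``latent error dynamics'', but its proof uses $V(\psi)$ along \eqref{eq:latent} with the non-incremental sector condition. Read literally, it therefore bounds $|\psi(t)|$, not $|\hat\psi(t)-\psi(t)|$, and the $|\psi(0)|$ in \eqref{eq:approx} is a trace of this. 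Working with $e=\hat\psi-\psi$, so that $B(\xi)u$ cancels, is the right repair, and the paper does not make it.

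However, the way you propose to close step 2 fails, because \eqref{eq:iss_lmi} does not imply contraction in the $P^\star$-metric. Testing it on $(v,v)$ gives only $v^\top[\mathrm{He}(P(K+r_\theta'))+\alpha P+P^2/\gamma]v\le(2-\kappa)\lambda|v|^2$. It need not even give stability. Take $p=1$, $P=1$, $\kappa=1$, $\lambda=\tfrac12$, $r_\theta(\psi)=\psi$, $K=-0.9$, $\alpha=0.05$, $\gamma=20$. Then $\Xi=\bigl[\begin{smallmatrix}-1.2&1\\1&-1\end{smallmatrix}\bigr]\prec0$ and Assumption~\ref{ass:sector} holds, yet $\dot\psi=0.1\,\psi$. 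This example also shows Theorem~\ref{thm:ISS} cannot be used as a black box.

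So, as you already did for the residual, impose an explicit incremental hypothesis, e.g.\ $\mathrm{He}(P^\star(K(\xi)+r_\theta'(\zeta)))+\alpha P^\star+(P^\star)^2/\gamma^\star\preceq0$ uniformly in $(\zeta,u,\xi)$. This condition is affine in $r_\theta'$, so it holds for $\bar R=\int_0^1 r_\theta'(\psi+se)\,ds$, where $r_\theta(\hat\psi)-r_\theta(\psi)=\bar R e$. Young's inequality then gives $\tfrac{d}{dt}(e^\top P^\star e)\le-\alpha\,e^\top P^\star e+\gamma^\star|w_N|^2$. The comparison lemma yields the gain $c_2\sqrt{\gamma^\star}$, not $c_2\gamma^\star$; the same slip occurs in the square-root step of Theorem~\ref{thm:ISS}. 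Hence the stated form requires $\gamma^\star\ge1$.

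Finally, $|e(0)|\le|\psi(0)|$ is false in general. Instead, initialize the model at the encoded true state, $\hat\psi(0)=\Psi(\Phi(x(0),\xi(0)))$. Then $e(0)=0$ and \eqref{eq:approx} holds a fortiori.
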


\begin{proof}
The lifting residual $\Delta=f-K\psi-Bu$ is bounded by
$\|\Delta\|_{L_2[0,T]}\leq(\varepsilon_N T)^{1/2}$ by the
EDMD error characterization in \cite{korda2018convergence}.
Treating $\Delta$ as the disturbance $w$ in Theorem~\ref{thm:ISS}
and applying $\Phi^{-1}$ with its Lipschitz constant yields
\eqref{eq:approx}. The reconstruction residual $O(\varepsilon_\Phi)$
follows from a standard triangle inequality argument.
\hfill$\blacksquare$
\end{proof}

\begin{remark}
Theorem~\ref{thm:approx} quantifies the trade-off identified
in \cite{koopmaniss}: a richer dictionary reduces $\varepsilon_N$
and therefore the ISS gain $\gamma^\star$, but the LMI
\eqref{eq:iss_lmi} may become infeasible if the learned
$K(\xi)$ is near-unstable. The parameterization
\eqref{eq:K_param} resolves this tension by construction.
\end{remark}

%==================================================================
\section{Experiments}
\label{sec:exp}
%==================================================================

\subsection{Setup and Baselines}

We evaluate ISS-BKNO on four tasks:
\textbf{T1}~a pendulum with varying rod length ($l\in[0.5,1.5]$\,m);
\textbf{T2}~a cart-pole with time-varying cart mass
($m_c\in[0.5,2.0]$\,kg);
\textbf{T3}~a unicycle navigating a corridor with friction
switching;
\textbf{T4}~a Franka Emika 7-DOF manipulator with payload
uncertainty (0--1\,kg).
All tasks involve a measurable environmental parameter $\xi$
that shifts the effective inertia or friction.

Five baselines are compared: \textbf{NODE} (vanilla Neural
ODE \cite{chen2018neural}), \textbf{Koopman-EDMD}
(EDMD-identified linear model \cite{williams2015edmd}),
\textbf{Koopman+ISS} (EDMD with post-hoc ISS verification
\cite{koopmaniss}), \textbf{CSODE} (ControlSynth Neural ODE
without Koopman lifting \cite{controlsynth}), and
\textbf{ICODE} (extrinsic-input Neural ODE without bundle
geometry \cite{icode}).

Training uses 6000 trajectory segments of length 1\,s at
100\,Hz; evaluation uses 500 held-out segments. The
observable dictionary for EDMD baselines comprises degree-3
Chebyshev polynomials; ISS-BKNO uses $p=32$ lifted states
with a 3-layer MLP residual.

\subsection{Prediction Accuracy under Environmental Shifts}

Fig.~\ref{fig:pred_rmse} shows the one-step prediction RMSE
on T1--T4 as the environmental parameter $\xi$ is swept across
its range. ISS-BKNO achieves the lowest error throughout,
with the advantage widening near the boundary of the training
distribution---the region where models lacking an explicit
fiber structure tend to extrapolate poorly.

\begin{figure}[t]
\centering
\begin{tikzpicture}
\begin{axis}[
  width=0.95\columnwidth, height=5.0cm,
  xlabel={Normalized environment parameter $\bar\xi$},
  ylabel={Prediction RMSE},
  xmin=0, xmax=1.0,
  ymin=0, ymax=0.58,
  legend pos=north west,
  legend style={font=\footnotesize, fill=white, fill opacity=0.88,
    draw=gray!40},
  grid=major, grid style={line width=0.3pt, draw=gray!22},
  tick label style={font=\footnotesize},
  label style={font=\footnotesize},
  every axis plot/.append style={line width=0.9pt}
]
\addplot[blue!65, dashed, mark=square*, mark size=2pt] coordinates {
  (0,0.12)(0.1,0.14)(0.2,0.17)(0.3,0.21)(0.4,0.26)(0.5,0.30)
  (0.6,0.35)(0.7,0.40)(0.8,0.45)(0.9,0.50)(1.0,0.55)};
\addlegendentry{NODE \cite{chen2018neural}};
\addplot[cyan!70!black, densely dotted, mark=triangle*, mark size=2pt] coordinates {
  (0,0.10)(0.1,0.13)(0.2,0.17)(0.3,0.22)(0.4,0.27)(0.5,0.32)
  (0.6,0.37)(0.7,0.42)(0.8,0.47)(0.9,0.52)(1.0,0.56)};
\addlegendentry{Koopman-EDMD \cite{williams2015edmd}};
\addplot[orange!80, dashdotted, mark=diamond*, mark size=2pt] coordinates {
  (0,0.08)(0.1,0.10)(0.2,0.13)(0.3,0.17)(0.4,0.20)(0.5,0.24)
  (0.6,0.27)(0.7,0.31)(0.8,0.35)(0.9,0.39)(1.0,0.42)};
\addlegendentry{CSODE \cite{controlsynth}};
\addplot[green!55!black, dotted, mark=pentagon*, mark size=2pt] coordinates {
  (0,0.07)(0.1,0.09)(0.2,0.12)(0.3,0.15)(0.4,0.18)(0.5,0.21)
  (0.6,0.24)(0.7,0.27)(0.8,0.30)(0.9,0.33)(1.0,0.36)};
\addlegendentry{ICODE \cite{icode}};
\addplot[red!75, solid, mark=*, mark size=2pt] coordinates {
  (0,0.04)(0.1,0.05)(0.2,0.07)(0.3,0.09)(0.4,0.11)(0.5,0.14)
  (0.6,0.17)(0.7,0.20)(0.8,0.23)(0.9,0.26)(1.0,0.29)};
\addlegendentry{ISS-BKNO (ours)};
\end{axis}
\end{tikzpicture}
\caption{Prediction RMSE vs.\ normalized environment parameter
$\bar\xi\in[0,1]$ (average over T1--T4). ISS-BKNO degrades
most gracefully toward the boundary of the training distribution.}
\label{fig:pred_rmse}
\end{figure}

\subsection{Disturbance Robustness on the Pendulum}

Fig.~\ref{fig:pendulum} shows the angle $\theta$ trajectory
on T1 under a sinusoidal disturbance $w(t)=0.3\sin(5t)$,
injected from $t=2$\,s onward. Methods without an explicit
ISS certificate (NODE, Koopman-EDMD) accumulate substantial
error after the disturbance onset. CSODE and ICODE recover
but with visible oscillations. ISS-BKNO maintains a bounded
error envelope whose magnitude matches the theoretical
bound~\eqref{eq:iss_bound}.

\begin{figure}[t]
\centering
\begin{tikzpicture}
\begin{axis}[
  width=0.95\columnwidth, height=5.0cm,
  xlabel={Time (s)},
  ylabel={Pendulum angle $\theta$ (rad)},
  xmin=0, xmax=5,
  ymin=-1.2, ymax=1.6,
  legend pos=north east,
  legend style={font=\footnotesize, fill=white, fill opacity=0.88,
    draw=gray!40},
  grid=major, grid style={line width=0.3pt, draw=gray!22},
  tick label style={font=\footnotesize},
  label style={font=\footnotesize},
  every axis plot/.append style={line width=0.9pt}
]
% Disturbance onset
\draw[dashed, gray!55, line width=0.6pt]
  (axis cs:2,-1.2) -- (axis cs:2,1.6);
\node[font=\tiny, gray, anchor=north west] at (axis cs:2.05,1.55)
  {disturbance};
% True reference
\addplot[black, very thick] coordinates {
  (0,0.1)(0.5,0.65)(1.0,1.10)(1.5,1.22)(2.0,1.15)(2.5,0.85)
  (3.0,0.42)(3.5,-0.05)(4.0,-0.45)(4.5,-0.82)(5.0,-1.02)};
\addlegendentry{Reference};
% NODE
\addplot[blue!65, dashed] coordinates {
  (0,0.11)(0.5,0.64)(1.0,1.08)(1.5,1.21)(2.0,1.18)(2.5,1.02)
  (3.0,0.72)(3.5,0.30)(4.0,-0.10)(4.5,-0.45)(5.0,-0.62)};
\addlegendentry{NODE};
% CSODE
\addplot[orange!80, dashdotted] coordinates {
  (0,0.10)(0.5,0.64)(1.0,1.09)(1.5,1.22)(2.0,1.16)(2.5,0.90)
  (3.0,0.52)(3.5,0.06)(4.0,-0.32)(4.5,-0.65)(5.0,-0.85)};
\addlegendentry{CSODE};
% ICODE
\addplot[green!55!black, dotted] coordinates {
  (0,0.10)(0.5,0.65)(1.0,1.10)(1.5,1.22)(2.0,1.15)(2.5,0.87)
  (3.0,0.46)(3.5,-0.01)(4.0,-0.38)(4.5,-0.73)(5.0,-0.92)};
\addlegendentry{ICODE};
% ISS-BKNO
\addplot[red!75, solid] coordinates {
  (0,0.10)(0.5,0.65)(1.0,1.10)(1.5,1.22)(2.0,1.15)(2.5,0.85)
  (3.0,0.43)(3.5,-0.04)(4.0,-0.44)(4.5,-0.80)(5.0,-1.00)};
\addlegendentry{ISS-BKNO};
\end{axis}
\end{tikzpicture}
\caption{Pendulum angle under sinusoidal disturbance from
$t=2$\,s. ISS-BKNO tracks the reference most closely; NODE
accumulates a persistent offset after the disturbance onset.}
\label{fig:pendulum}
\end{figure}

\subsection{Quantitative Summary}

Table~\ref{tab:results} reports prediction RMSE and the
LMI feasibility rate (fraction of test episodes for which
the ISS certificate \eqref{eq:iss_lmi} remains feasible)
across all four tasks. ISS-BKNO achieves the lowest RMSE
on every benchmark and a 100\% feasibility rate, confirming
that the Hurwitz parameterization \eqref{eq:K_param} prevents
the LMI from becoming infeasible during testing.

\begin{table}[t]
\caption{Prediction RMSE (Mean $\pm$ Std, 500 Test Episodes)
and LMI Feasibility Rate}
\label{tab:results}
\centering
\setlength{\tabcolsep}{3.0pt}
\renewcommand{\arraystretch}{1.05}
\begin{tabular}{lcccc}
\toprule
\multirow{2}{*}{\textbf{Method}}
  & \multicolumn{2}{c}{\textbf{RMSE}}
  & \textbf{LMI} \\
\cmidrule(lr){2-3}
  & T1--T2 & T3--T4 & \textbf{Feas.} \\
\midrule
NODE \cite{chen2018neural}
  & $0.241\pm0.038$ & $0.318\pm0.051$ & --- \\
Koopman-EDMD \cite{williams2015edmd}
  & $0.228\pm0.035$ & $0.304\pm0.047$ & N/A \\
Koopman+ISS \cite{koopmaniss}
  & $0.175\pm0.028$ & $0.241\pm0.039$ & 94\% \\
CSODE \cite{controlsynth}
  & $0.152\pm0.024$ & $0.208\pm0.033$ & --- \\
ICODE \cite{icode}
  & $0.138\pm0.021$ & $0.192\pm0.031$ & --- \\
\textbf{ISS-BKNO (ours)}
  & $\mathbf{0.101\pm0.016}$
  & $\mathbf{0.134\pm0.022}$
  & \textbf{100\%} \\
\bottomrule
\end{tabular}
\end{table}

The 94\% feasibility rate of Koopman+ISS arises because the
unconstrained EDMD regression occasionally produces a matrix
$\mathbf{A}_K$ with spectral radius slightly above 1 under
distributional shift, precisely the failure mode that
\cite{koopmaniss} flags as requiring re-identification.
The parameterization \eqref{eq:K_param} prevents this
entirely by construction.

%==================================================================
\section{Conclusion}
%==================================================================

ISS-BKNO unifies four complementary ideas---fiber bundle
geometry \cite{bundle}, environment-conditioned dynamics
\cite{icode}, Koopman lifting with ISS certification
\cite{koopmaniss}, and convergence-guaranteed Neural ODEs
\cite{controlsynth}---into a single framework with a
joint training objective and closed-form stability guarantees.
The Hurwitz parameterization \eqref{eq:K_param} eliminates
the need for post-hoc LMI feasibility checks, the ISS
regularizer drives $\gamma^\star$ to its minimum during
optimization, and the fiber bundle encoder enforces geometric
consistency across environments.

Experimental results on four benchmark tasks confirm a 58\%
reduction in prediction RMSE over the vanilla Neural ODE and
a 100\% LMI feasibility rate under distribution shift.
Theorem~\ref{thm:approx} ties prediction accuracy to the
EDMD residual $\varepsilon_N$, providing a principled
criterion for dictionary selection.

Future work will extend ISS-BKNO to stochastic disturbances
using the $i$-ISS framework \cite{sontag2008}, investigate
joint contraction conditions for distributed multi-agent
settings along the lines of \cite{lohmiller1998}, and
explore learning the observable map with convergence
guarantees using structured architectures
\cite{revay2023recurrent}.

%==================================================================

\end{document}